\newcommand{\eps}{\varepsilon}
\newcommand{\ls}{\leqslant}
\newcommand{\gs}{\geqslant}
\newtheorem{state}{Утверждение}
\newtheorem{thm}{Теорема}
\newtheorem{dfn}{Определение}
\begin{document}

\title{Об улучшенной оценке меры кластерной структуры в компактном метрическом пространстве}
\author{Пушняков~A.\,C. \\
        \small{141700, Московская облаcть, г. Долгопрудный, Институтский пер., 9, МФТИ} \\
        \small{pushnyakovalex@mail.ru} \\
        \small{Поступила в редакцию 09.08.2017 г.}}
\date{}

\maketitle

\begin{abstract}

Рассматривается компактное метрическое пространство с ограниченной борелевской мерой. Под $r$"~кластером понимается любое измеримое множество диаметра не более $r$. Исследуется взаимосвязь мер кластерных структур~--- наборов попарно отделенных друг от друга кластеров,~--- и распределения расстояний. Показано, что полученная ранее нижняя оценка меры максимальной кластерной структуры является неулучшаемой в асимптотическом смысле. Предлагается ввести дополнительные ограничения на распределение расстояний, позволяющие улучшить оценку меры максимальной кластерной структуры.

\textit{Ключевые слова}: кластеризация, компактное метрическое пространство, борелевская мера, метрика Хаусдорфа, теорема Бляшке.

\end{abstract}

\section{Введение}

Мы рассматриваем компактное метрическое пространство с ограниченной борелевской мерой (компактная метрическая тройка Громова \cite{gromov2007metric, vershik1998univer}).
С точки зрения задачи кластеризации интересен случай, когда существует кластерная структура, мера которой близка к мере всего пространства.
В данном случае можно утверждать, что метрическое пространство представляется в виде объединения кластеров.

В предыдущей статье \cite{push17} предлагались следующие ограничения на распределение расстояний, соответствующие интуитивному представлению метрики, являющейся объединением кластеров.
Вначале все расстояния разделяются на \textit{короткие}, \textit{средние} и \textit{длинные} ребра в зависимости от параметра $r$, отвечающего за диаметр кластеров.
Короткие ребра соответствуют \textit{внутрикластерным} расстояниям, а длинные ребра~--- \textit{межкластерным} расстояниям соответственно.
Первое ограничение состоит в том, что мера средних ребер должна быть мала.
Второе ограничение обусловливается тем, что мы ищем кластерную структуру из ровно $k$ кластеров: среди любых $k + 1$ точки какие-то две должны попасть в один кластер, поэтому мы требуем, чтобы мера $k + 1$"~антиклик~--- наборов из $k + 1$ точки без коротких ребер,~--- также была мала. Оба ограничения формулируются параметрически.

В описанных выше ограничения была получена нижняя оценка на меру кластерной структуры максимальной меры, которая стремится к мере всего пространства (при стремлении соответствующих параметров к нулю).
Основной проблемой является то, что сходимость медленная: оценка содержит коэффициент вида $\beta^{\frac{1}{k + 1}}$ ($\beta$~--- один из параметров).

В данной статье мы покажем, что с асимптотической точки зрения предыдущая оценка неулучшаема, однако, если дополнительно предположить, что мера $k$"~антиклик отделена от нуля, то можно значительно улучшить оценку.
Техника доказательства будет аналогична предложенной в~\cite{push17}.
Сперва мы получим искомый результат для конечного полуметрического пространства с равномерной мерой.
Вместо максимальной кластерной структуры будем оценивать жадную кластерную структуру, построение которой вполне конструктивно (хотя апеллирует к поиску максимальной клики в графе). Завершающим шагом будет обобщение на случай произвольного компактного пространства с помощью теоремы Бляшке~\cite{polovin2004}.

\section{Постановка задачи}

Пусть дано компактное метрическое пространство $(X, \rho)$ с ограниченной борелевской мерой $\mu$.
Любое борелевское подмножество~$X$ диаметра не более~$r$ будем называть~\textit{$r$-кластером}.

\begin{dfn}
Семейство $2r$"~кластеров $\mathcal{X} = \{X_1, \dots X_k\}$ будем называть $r$"~кластерной структурой порядка~$k$, если $\rho(X_i, X_j) \gs r$ при всех~$1 \ls i < j \ls k$, где~$\rho(A, B) = \inf \{\rho(x, y)\colon x \in A,\, y \in B\}$.
Мерой~$\mathcal{X}$ назовем величину $\mu(\mathcal{X}) \stackrel{\textrm{def}}{=} \displaystyle\sum_{i = 1}^k \mu(X_i)$.
\label{def:cluster_structure}
\end{dfn}

Пару точек $(x, y) \in X^2$ будем называть \textit{ребром}, длина ребра~--- это~$\rho(x,y)$.
Если $\rho(x, y) \ls r$, то будем называть ребро~$(x, y)$ \textit{$r$"~коротким}; если $\rho(x, y) > 3r$, то будем называть ребро~$(x, y)$ \textit{$r$"~длинным}; все остальные ребра~--- \textit{$r$"~средние}.
Набор точек $(x_1, \dots, x_{k})$ назовем \textit{$r$"~антикликой порядка $k$}, если $\rho(x_i, x_j) > r$ при всех $1 \ls i < j \ls k$.
Если понятно, о каком~$r$ идет речь, то приставка~$r$ будет опускаться.

В предыдущей статье рассматривались следующие параметрические ограничения на меру средних ребер и антиклик порядка $k + 1$:
\begin{equation}
M(X) = \frac{1}{2}\mu\{(x, y) \in X^2 \colon r < \rho(x, y) \ls 3r\} \ls \frac{1}{2}\delta \mu(X)^2,
\label{eq:med_dist}
\end{equation}
\begin{equation}
T_{k + 1}(X) = \frac{1}{(k + 1)!}\mu\{(x_1, \dots x_{k + 1}) \in X^{k + 1} \colon \rho(x_i, x_j) > r,\, 1 \ls i < j \ls k + 1\} \ls \frac{\beta \mu(X)^{k + 1}}{(k + 1)!},
\label{eq:k+1_anticlique}
\end{equation}
где $\delta, \beta > 0$~--- параметры. Была получена следующая оценка на меру $r$"~кластерной структурой порядка~$k$ максимальной меры $\mathcal{X}^*$:
\begin{equation}
\mu(\mathcal{X}^*) \gs \mu(X)(1 - \sqrt{\delta}(2k + 1) - (k(e + 1) + 1)\beta^{\frac{1}{k + 1}})
\label{eq:old_result}
\end{equation}

Следующее утверждение показывает, что по параметру $\beta$ с асимптотической точки зрения данная оценка принципиально улучшена быть не может.
\begin{state}
Пусть фиксированы $r > 0$ и $0 < \beta < 1$. Существует конечное метрическое пространство $(X, \rho)$ с равномерной мерой такое, что $M(X) = 0$, $T_{k + 1}(X) \ls \dfrac{1}{(k + 1)!}\beta |X|^{k + 1}$ и $|X| - \mu(\mathcal{X}^*) \gs \dfrac{1}{k + 1}\beta^{\frac{1}{k}}$.
\label{state:tight_bound}
\end{state}
\begin{proof}
Пусть $X = B_0 \sqcup B_1 \sqcup \ldots \sqcup B_{k}$, $|B_j| = \lambda|X|$ при $1 \ls j \ls k$, и $\lambda(k + 1) \ls 1$.
$$
\rho(x, y) = \left\{
  \begin{array}{ll}
    r, & x, y \in B_j \\
    4r, & x \in B_i, y \in B_j,\, i \neq j
  \end{array}
\right.
$$

Тогда $T_{k + 1}(X) = |X|^{k + 1}(1 - k\lambda) \lambda^k$ и $|X| - \mu(\mathcal{X}^*) = \lambda|X|$. Если взять $\lambda = \dfrac{\beta^{\frac{1}{k}}}{k + 1}$, то получаем требуемое утверждение.

\end{proof}

В вышеописанной конструкции при малых $\lambda$ фактически присутствует один кластер, а не $k$.
Иными словами, меры кластеров в кластерной структуре могут сильно различаться.

Предлагается ввести дополнительной ограничение, которое будет \textit{балансировать} меры кластеров.
%Для конструкции из доказательства утверждения~(\ref{state:tight_bound}) имеем
%$$
%T_{k}(X) = |X|^k(\lambda^k + k(1 - k\lambda)\lambda^{k - 1}).
%$$
%Таким образом, мера $k$"~антиклик мала при малых $\lambda$.
Мы потребуем, чтобы мера антиклик порядка $k$ была ограничена снизу в следующем смысле
\begin{equation}
T_{k}(X) = \frac{1}{k!}\mu\{(x_1, \dots x_{k}) \in X^{k} \colon \rho(x_i, x_j) > r,\, 1 \ls i < j \ls k\} \gs \frac{\alpha \mu(X)^{k}}{k!}.
\label{eq:k_anticlique}
\end{equation}

Далее будет показано, как, используя условие~(\ref{eq:k_anticlique}), можно улучшить оценку на~$\mu(\mathcal{X}^*)$.
Как и ранее, мы будем рассматривать конечное полуметрическое пространство с равномерной мерой, а затем обобщим оценку на случай произвольного компактного пространства.
Вместо кластерной структуры максимальной меры рассмотрим жадную кластерную структуру.

\section{Жадная кластерная структура}

Пусть $X_1$~--- множество максимальной мощности среди всех $2r$"~кластеров (если таких множеств несколько, то выберем любое). Обозначим его окрестность $r$ за $Z_1$, т.е.
$$
Z_1 = \{x \in X \colon \rho(x, X_1) < r\}
$$
Пусть у нас есть попарно непересекающиеся множества $Z_1, \dots, Z_m$. Тогда $X_{m + 1}$~--- множество максимальной мощности среди всех $2r$"~кластеров в $X \setminus \displaystyle\bigcup_{i = 1}^{m} Z_i$, а множество $Z_{m + 1}$~--- $r$"~окрестность $X_{m + 1}$ во множестве $X \setminus \displaystyle\bigcup_{i = 1}^{m} Z_i$, т.е.
$$
Z_{m + 1} = \left\{x \in X \setminus \bigcup_{i = 1}^{m} Z_i \colon \rho(x, X_{m + 1}) < r\right\}
$$
Так как мощность $X$ конечна, то процедура оборвётся не некотором шаге.

\begin{dfn}
Построенное разбиение $X = \displaystyle\bigsqcup_{i = 1}^n Z_i$ мы назовем жадным кластерным разбиением, а семейство $2r$"~кластеров $\{X_1,\dots X_k\}$ назовем жадной $r$"~кластерной структурой порядка~$k$.
\end{dfn}

Рассмотрим множества~$Z_i$ и~$X_i \subset Z_i$.
Для любых $x \in X_i$ и $z \in Z_i$ выполнено $\rho(x, z) \ls 3r$, поэтому концы всех длинных ребер лежат в $Z_i \setminus X_i$.
Рассмотрим во множестве $Z_i \setminus X_i$ максимальное паросочетание из длинных ребер, которое покрывает множество $U_i$.
Пусть $Y_i = Z_i \setminus (X_i \cup U_i)$, тогда $X_i \cup Y_i$ является $3r$"~кластером.
\begin{state}
Пусть $(A, \rho)$~--- конечное полуметрическое пространство диаметра не более~$3r$, и множество~$B$ является $2r$"~кластером максимальной мощности. Тогда число средних ребер не менее $M(A) \gs \frac{1}{2}\max\{|A|, 2|B|\}|A\setminus B|$.
\label{sate:max_cluster}
\end{state}

Доказательство этого утверждения можно найти в~\cite{push14}

Также для любого ребра $(u_1, u_2)$ из паросочетания, покрывающего $U_i$, и точки $x \in X_i$ хотя бы одно из ребер $(x, u_j)$ является средним.
В купе с утверждением~\ref{sate:max_cluster} получаем следующее неравенства
\begin{equation}
M(Z_i) \gs \frac{1}{2}(|X_i| + |Y_i|)|Y_i| + \frac{1}{2}|U_i||X_i|
\label{eq:med_int}
\end{equation}
\begin{equation}
M(Z_i) \gs |X_i||Y_i| + \frac{1}{2}|U_i||X_i|
\end{equation}

Пусть $L(Z_i)$~--- количество длинных ребер во множестве $Z_i$, тогда
$$
L(Z_i) \ls |U_i||Y_i| + \dfrac{1}{2}|U_i|^2 \ls \frac{|U_i|}{|X_i|}M(Z_i)
$$

Пусть $I_1$~--- множество индексов $1 \ls i \ls n$ таких, что $|X_i|(k + 1) \ls |Z_i|$.

Если $i \notin I_1$, то $L(Z_i) \ls kM(Z_i)$ и $\displaystyle\sum_{i \notin I_1}L(Z_i) \ls kM(X) \ls \frac{1}{2}k\delta|X|^2$.

\begin{state}
$$
\sum_{i \in I_1}|Z_i| \ls \frac{(k + 1)\beta}{\alpha}|X|
$$
\label{sate:bad_zs}
\end{state}
\begin{proof}
Рассмотрим произвольную антиклику порядка $k$ $(a_1, \ldots, a_k)$. Пусть $B_r(a_j)$~--- замкнутый шар радиуса $r$ с центром $a_j$. Тогда $|B_r(a_j) \cap Z_j| \ls |X_j|$, и при $i \in I_1$ имеем $\left|Z_i \setminus \displaystyle\bigcup_{j = 1}^k B_r(a_j)\right| \geqslant \dfrac{1}{k + 1}|Z_i|$. Таким образом, для любой точки $b \in Z_i \setminus \displaystyle\bigcup_{j = 1}^k B_r(a_j)$ $(b, a_1, \ldots, a_k)$~--- антиклика порядка $k + 1$. Так как каждую антиклику порядка $k + 1$ можно получить подобным построением не более $k + 1$ раза, то, используя неравенства~\ref{eq:k+1_anticlique} и \ref{eq:k_anticlique}, получаем
$$
\dfrac{\beta |X|^{k + 1}}{(k + 1)!} \gs T_{k + 1}(X) \gs \dfrac{1}{(k + 1)^2}\sum_{i \in I_1}|Z_i| T_{k}(X) \gs \dfrac{\alpha |X|^k}{k!(k + 1)^2}\sum_{i \in I_1}|Z_i|
$$
\end{proof}

\section{Оценки числа антиклик}

Пусть $\{W_i\}_{i = 1}^n$ упорядоченные по убыванию мощности множеств $Z_i$.
Введем обозначение для симметрического многочлена от $n$ переменных
\begin{equation*}
\sigma_s(y_1,\dots y_n) \stackrel{\textrm{def}}{=} \sum_{1 \ls i_1 < \ldots < i_s  \ls n}\prod_{j = 1}^{s}y_j,
\label{eq:symmetric_poly}
\end{equation*}

Мы имеем следующую оценку снизу на число антиклик порядка $k + 1$, доказанную~в~\cite{push17}.
\begin{state}
$$
T_{k + 1}(X) \gs \frac{1}{(k + 1)!}\sigma_{k + 1}(W_1,\ldots, W_n)
$$
\label{sate:k+1_anticliqe_bound}
\end{state}

Далее мы получим верхнюю оценку числа антиклик порядка $k$ в терминах симметрических многочленов от $W_i$.

\begin{state}
Пусть $\delta + \dfrac{(k + 1)\beta^2}{\alpha^2} \ls \dfrac{2}{(k + 1)^3}$, тогда
$$
T_{k}(X) \ls \sigma_{k}(W_1,\ldots, W_n) + \frac{k\lambda|X|^k}{2(k - 2)!},
$$
где $\lambda = \dfrac{k + 1}{2}\delta  + \dfrac{(k + 1)^2\beta^2}{2\alpha^2}$
\label{sate:k1_anticliqe_bound}
\end{state}

\begin{proof}
Рассмотрим произвольную антиклику порядка $k$ $(a_1, \ldots, a_k)$. Проделаем с ней следующую процедуру. Изначально все точки $a_i$ не отмечены. Пока существуют $i$ и $j$ такие, что $a_i$ и $a_j$ лежат в одном $Z_l$ и не отмечены, мы отмечаем $a_i$ и $a_j$ вместе с ребром $(a_i, a_j)$. В конце процедуры мы получим набор $a_i$, попарно лежащих в разных $Z_l$ и набор средних или длинных ребер из множества
$$
\Lambda = \bigcup_{i = 1}^n\{(x, y) \in Z_i^2 \colon \rho(x, y) > r\},
$$

где пары $(x, y)$ считаются неупорядоченными.
Таким образом каждую, антиклику порядка $k$ можно закодировать $s$ ребрами из $\Lambda$ и $k - 2s$ токами из попарно различных $Z_i$, и
$$
T_{k}(X) \ls \sum_{s} \sigma_{k - 2s}(W_1,\ldots, W_n)|\Lambda|^s
$$
$$
|\Lambda| = \sum_{i = 1}^n (L(Z_i) + M(Z_i)) = \sum_{i \in I_1} (L(Z_i) + M(Z_i)) + \sum_{i \notin I_1} (L(Z_i) + M(Z_i)) \ls
$$
$$
\ls \frac{1}{2}\sum_{i \in I_1} W_i^2 + (k + 1)\sum_{i \notin I_1} M(Z_i) \ls \frac{1}{2}\left(\sum_{i \in I_1} W_i\right)^2 + (k + 1)M(X)
$$
Используя утверждение~\ref{sate:bad_zs} и неравенство~(\ref{eq:med_dist}), получаем
$$
|\Lambda| \ls \frac{k + 1}{2}\delta |X|^2 + \frac{(k + 1)^2\beta^2}{2\alpha^2} |X|^2 \stackrel{\textrm{def}}{=} \lambda |X|^2
$$
Так как $\lambda \ls \dfrac{1}{(k + 1)^2}$ и
$$
\sigma_{s}(W_1,\ldots, W_n) \ls {n \choose s}\frac{|X|^s}{n^s} \ls \frac{|X|^s}{s!},
$$
то
$$
T_{k}(X) \ls \sigma_{k}(W_1,\ldots, W_n) + \sum_{s \geq 1} \frac{|X|^{k - 2s}}{(k - 2s)!}\lambda^s |X|^{2s} \ls \sigma_{k}(W_1,\ldots, W_n) + \frac{k}{2}\frac{\lambda|X|^k}{(k - 2)!}
$$
\end{proof}

Далее мы будем предполагать, что условия утверждения~\ref{sate:k1_anticliqe_bound} выполнены. Используя неравенства~(\ref{eq:k_anticlique}) и~(\ref{eq:k+1_anticlique}), а также утверждения~\ref{sate:k+1_anticliqe_bound} и~\ref{sate:k1_anticliqe_bound} получаем
\begin{equation}
\sigma_{k + 1}(W_1,\ldots, W_n) \ls \beta |X|^{k + 1}
\end{equation}
\begin{equation}
\sigma_{k}(W_1,\ldots, W_n) \gs \frac{1}{k!}\left(\alpha - \frac{1}{2}\lambda k^3\right)|X|^{k} \stackrel{\textrm{def}}{=} \frac{1}{k!}\alpha' |X|^k
\end{equation}

Теперь мы можем получить простую оценку на $\displaystyle\sum_{i = 1}^kW_i$
\begin{state}
$$
\sum_{i = 1}^kW_i \gs \left(1 - \frac{(k + 1)!\beta}{\alpha'}\right)|X|
$$
\label{sate:frist_k}
\end{state}
\begin{proof}
Пусть $\mathcal{J}$~--- семейство всех $k$"~элементных подмножеств множества $N = \{1, 2,\ldots, n\}$. Так как последовательность $\{W_i\}_{i = 1}^n$ монотонно убывает, то
$$
\sum_{i = k + 1}^nW_i \sigma_{k}(W_1,\ldots, W_n) = \sum_{J \in \mathcal{J}} \prod_{j \in J} \left(W_j\sum_{i = k + 1}^nW_i \right) \ls
$$
$$
\ls \sum_{J \in \mathcal{J}} \prod_{j \in J} \left(W_j\sum_{i \in N \setminus J}W_i\right) = (k + 1)\sigma_{k + 1}(W_1,\ldots, W_n)
$$
$$
\sum_{i = 1}^kW_i = |X| - \sum_{i = k + 1}^nW_i \gs \frac{(k + 1)\sigma_{k + 1}(W_1,\ldots, W_n)}{\sigma_{k}(W_1,\ldots, W_n)} \gs \left(1 - \frac{(k + 1)!\beta}{\alpha'}\right)|X|
$$

\end{proof}

\section{Оценка мощности кластерной структуры}

Пусть $I_0$~--- множество индексов, соответствующих $k$ наибольшим по мощности множествам $Z_i$. Нам осталось оценить величину $\displaystyle\sum_{i \in I_0} |Z_i| - \displaystyle\sum_{i \in I_0} |X_i|$.

Если $i \in I_0 \cap I_1$, то по утверждению~\ref{sate:bad_zs}
$$
\sum_{i \in I_0 \cap I_1} |Z_i| \ls \frac{(k + 1)\beta}{\alpha}|X|
$$

Если $i \in I_0 \setminus I_1$, то $(k + 1)|X_i| \gs |Z_i|$ и, используя неравенство~(\ref{eq:med_int}), получаем
$$
M(Z_i) \gs \frac{1}{2(k + 1)}|Z_i|(|Y_i| + |W_i|) = \frac{1}{2(k + 1)}|Z_i|(|Z_i| - |X_i|)
$$
$$
|Z_i| - |X_i| \ls \frac{2(k + 1)M(Z_i)}{|Z_i|}
$$
Пусть $I_2$~--- множество индексов таких, что $|Z_i| \gs \sqrt{\delta}|X|$, тогда
$$
\sum_{i \in I_0 \setminus I_1} (|Z_i| - |X_i|) \ls \sum_{i \in I_0 \cap I_2 \setminus I_1} (|Z_i| - |X_i|) + k\sqrt{\delta}|X| \ls
$$
$$
 \ls k\sqrt{\delta}|X| + \frac{2(k + 1)M(X)}{\sqrt{\delta}|X|} \ls (2k + 1)\sqrt{\delta}|X|
$$
Наконец,
$$
\sum_{i \in I_0} |Z_i| - \sum_{i \in I_0} |X_i| \ls (2k + 1)\sqrt{\delta}|X| + \frac{(k + 1)\beta}{\alpha}|X|
$$
В сочетании с утверждением~\ref{sate:frist_k} имеем
$$
|X| - \sum_{i \in I_0} |X_i| \ls (2k + 1)\sqrt{\delta}|X| + \frac{(k + 1)\beta}{\alpha}|X| + \frac{(k + 1)!\beta}{\alpha'}|X| \ls
$$
$$
\ls (2k + 1)\sqrt{\delta}|X| + \frac{k!(k + 2)\beta}{\alpha - \frac{1}{2}k^3\lambda}|X|
$$
Так как система множеств $\{X_i\}_{i \in I_0}$ является $r$"~кластерной структурой порядка $k$, то мы получаем следующий результат.
\begin{thm}
Пусть $(X, \rho)$ конечное полуметрическое пространство с равномерной мерой $\mu$, а $\mathcal{X}^*$~--- $r$"~кластерная структура максимальной меры. Тогда, если выполнены неравенства~(\ref{eq:med_dist}),~(\ref{eq:k+1_anticlique}),~(\ref{eq:k_anticlique}) и $\delta + \dfrac{(k + 1)\beta^2}{\alpha^2} \ls \dfrac{2}{(k + 1)^3}$, то
\begin{equation}
\mu(\mathcal{X}^*) \gs \Psi(\alpha, \beta, \delta)|X|,
\label{eq:main_th}
\end{equation}
где
$$
\Psi(\alpha, \beta, \delta) = 1 - \sqrt{\delta}(2k + 1) - \frac{k!(k + 2)\beta}{\alpha - \frac{1}{2}k^3\lambda}
$$
$$
\lambda = \dfrac{k + 1}{2}\delta  + \dfrac{(k + 1)^2\beta^2}{2\alpha^2}
$$
\label{th:finite}
\end{thm}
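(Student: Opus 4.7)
The plan is to take the natural cluster partition $X = \bigsqcup_{i=1}^n Z_i$ from Section 3 together with its associated $2r$-separated cores $X_i \subset Z_i$, let $I_0$ be the index set of the $k$ clusters with largest weights $W_i = |Z_i|$, and output the $r$-cluster structure $\{X_i\}_{i \in I_0}$. This is admissible by construction, so $\mu(\mathcal{X}^*) \gs \sum_{i \in I_0}|X_i|$, and the entire task reduces to upper-bounding the deficit $|X| - \sum_{i \in I_0}|X_i|$.

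I would split this deficit as
$$|X| - \sum_{i \in I_0}|X_i| = \Bigl(|X| - \sum_{i \in I_0}|Z_i|\Bigr) + \sum_{i \in I_0}\bigl(|Z_i| - |X_i|\bigr)$$
and bound the two summands by independent mechanisms. The first measures how much mass the top $k$ clusters miss; since $I_0$ picks the $k$ largest $W_i$, Statement~\ref{sate:frist_k} applies verbatim and gives a loss of at most $\frac{(k+1)!\beta}{\alpha'}|X|$, where $\alpha' = \alpha - \tfrac{1}{2}k^3\lambda$ is provided by Statement~\ref{sate:k1_anticliqe_bound}. This is where the smallness hypothesis $\delta + \frac{(k+1)\beta^2}{\alpha^2} \ls \frac{2}{(k+1)^3}$ enters: it is exactly what is needed to keep $\alpha'$ positive and comparable to $\alpha$, so that the resulting ratio is meaningful.

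For the second summand I would split $I_0 = (I_0 \cap I_1) \sqcup (I_0 \setminus I_1)$, where $I_1$ is the set of ``degenerate'' indices with $|X_i|(k+1)\ls|Z_i|$. On $I_0\cap I_1$, Statement~\ref{sate:bad_zs} bounds the total $|Z_i|$-mass directly by $\frac{(k+1)\beta}{\alpha}|X|$, which absorbs the $|Z_i|-|X_i|$ terms. On $I_0 \setminus I_1$ I rewrite inequality~(\ref{eq:med_int}) as $|Z_i| - |X_i| \ls \frac{2(k+1)M(Z_i)}{|Z_i|}$ and split further by whether $|Z_i| \gs \sqrt{\delta}|X|$: the at most $k$ ``small'' indices trivially contribute $\ls k\sqrt{\delta}|X|$, while for the ``large'' indices $\frac{1}{|Z_i|} \ls \frac{1}{\sqrt{\delta}|X|}$, so summing over them and using the global bound $M(X) \ls \tfrac{1}{2}\delta|X|^2$ produces another $O(\sqrt{\delta}|X|)$ contribution, totaling $(2k+1)\sqrt{\delta}|X|$.

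The main obstacle is the combinatorial bookkeeping: one must ensure that every cluster is handled by exactly one mechanism (Statement~\ref{sate:bad_zs}, the global $M(X)$ bound, or the ``at most $k$ small clusters'' trivial estimate), and that the choice of $I_0$ by largest $W_i$ is compatible with Statement~\ref{sate:frist_k}, which is phrased for the first $k$ weights after reordering. Once this is arranged, adding the three contributions yields
$$|X| - \sum_{i \in I_0}|X_i| \ls (2k+1)\sqrt{\delta}|X| + \frac{(k+1)\beta}{\alpha}|X| + \frac{(k+1)!\beta}{\alpha'}|X|,$$
and a routine simplification (bounding the two $\beta$-terms by a single term of the form $\frac{k!(k+2)\beta}{\alpha'}|X|$) produces exactly the claimed $\Psi(\alpha,\beta,\delta)|X|$.
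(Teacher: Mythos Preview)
Your proposal is correct and follows the paper's own argument essentially step for step. The paper likewise selects the $k$ largest $Z_i$, bounds $|X|-\sum_{i\in I_0}|Z_i|$ via Statement~\ref{sate:frist_k}, splits $\sum_{i\in I_0}(|Z_i|-|X_i|)$ over $I_0\cap I_1$ (Statement~\ref{sate:bad_zs}) and $I_0\setminus I_1$ (inequality~(\ref{eq:med_int}) plus the threshold $|Z_i|\gtrless\sqrt{\delta}\,|X|$), and merges the two $\beta$-terms into $\dfrac{k!(k+2)\beta}{\alpha'}|X|$ exactly as you describe.
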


\section{Обобщение на случай произвольного компактного пространства}

Нам осталось обобщить предыдущую теорему на случай произвольного компактного пространства.
\begin{thm}
Пусть $(X, \rho)$ компактное метрическое пространство с ограниченной борелевской мерой $\mu$, $\mathcal{X}^*$~--- $r$"~кластерная структура максимальной меры, и функция распределения величины $\rho(x, y)$ непрерывна. Тогда, если выполнены неравенства~(\ref{eq:med_dist}),~(\ref{eq:k+1_anticlique}),~(\ref{eq:k_anticlique}) и $\delta + \dfrac{(k + 1)\beta^2}{\alpha^2} \ls \dfrac{2}{(k + 1)^3}$, то выполнено неравенство~(\ref{eq:main_th}).
\end{thm}

\begin{proof}

Доказательство почти дословно повторяет доказательство аналогичного результата из~\cite{push17}.

Фиксируем произвольное $0 < \eps < r$. В $X$ существует конечная~$\eps$"~сеть, а значит и разбиение~$X$ на конечное число~$N_{\eps}$ $\eps$"~кластеров $\{A_i\}_{i = 1}^{N_{\eps}}$.
Выберем $N_{\eps}$ положительных рациональных чисел $q_1, \dots q_{N_{\eps}}$ так, что $\mu(A_i) \gs q_i$ при $1 \ls i \ls N_{\eps}$ и $q_i \gs \mu(A_i)(1 - \eps)$.

Рассмотрим полуметрическое пространство конечной мощности $X_{\eps} = B_1 \sqcup \ldots \sqcup B_s$, где $\dfrac{|B_i|}{|B_j|} = \dfrac{q_i}{q_j}$, а функция расстояния $\rho_{\eps}$ определяется следующим образом:
$$
\rho_{\eps}(x, y) = \left\{
  \begin{array}{ll}
    0, & x, y \in B_i \\
    \rho(A_i, A_j), & x \in B_i, y \in B_j,\, i \neq j
  \end{array}
\right.
$$
Отметим, что
$$
\frac{(1 - \eps)\mu(A_i) |X_{\eps}|}{\mu(X)} \ls |B_i| = \frac{q_i |X_{\eps}|}{\displaystyle\sum_{j = 1}^{N_{\eps}}q_j} \ls \frac{\mu(A_i) |X_{\eps}|}{(1 - \eps)\displaystyle\sum_{j = 1}^{N_{\eps}}\mu(A_j)} = \frac{\mu(A_i) |X_{\eps}|}{(1 - \eps)\mu(X)}
$$

Чтобы завершить доказательство достаточно показать, что
$$
T_{k}(X_{\eps}) \gs \frac{1}{k!}\alpha_{\eps}|X_{\eps}|^{k},
$$
где $\alpha_{\eps} \to \alpha$ при $\eps \to 0$.
$$
T_{k}(X_{\eps}) = \sum_{1 \ls i_1 < \dots < i_{k} \ls N_{\eps}}\prod_{1 \ls j < l \ls k} [\rho(A_{i_j}, A_{i_l}) > r] \prod_{j = 1}^{k} |B_{i_j}| \gs
$$
$$
\gs \frac{(1 - \eps)^{k}|X_{\eps}|^{k}}{\mu(X)^{k}}\sum_{1 \ls i_1 < \dots < i_{k} \ls N_{\eps}}\prod_{1 \ls j < l \ls k} [\rho(A_{i_j}, A_{i_l}) > r] \prod_{j = 1}^{k} \mu(A_{i_j}) \gs \frac{(1 - \eps)^{k}|X_{\eps}|^{k}}{\mu(X)^{k}}T^{\eps}_{k}(X),
$$
где $T^{\eps}(X)$~--- мера $r - 2\eps$"~антиклик порядка $k$ в $X$.
Рассмотрим множество $r - 2\eps$"~антиклик порядка $k$ в $X$, не являющихся $r$"~антикликами
$$
\mathcal{T} = \{(x_1, \ldots, x_k) \in X^{k}\colon \rho(x_i, x_j) > r - 2\eps,\, 1 \ls i < j \ls k + 1, \exists l, s\; \rho(x_l, x_s) \ls r\}
$$
Заметим, что в силу непрерывности функции распределения $\rho(x, y)$ при $\eps \to 0$
$$
\mu(\mathcal{T}) \ls \frac{k^2}{2}\mu(X)^{k - 2}\mu\{(x, y) \in X^2\colon r - 2\eps < \rho(x, y) \ls r\} \to 0.
$$
Тогда
$$
T^{\eps}_{k}(X) \gs T_k(X) - \frac{1}{k!}\mu(\mathcal{T}) \gs \frac{1}{k!}\alpha\mu(X)^k - \frac{k^2}{2k!}\mu(X)^{k - 2}\mu\{(x, y) \in X^2\colon r - 2\eps < \rho(x, y) \ls r\}
$$
$$
T_{k}(X_{\eps}) \gs \frac{1}{k!}|X_{\eps}|^{k}\left(\alpha - \frac{k^2}{2\mu(X)^2}\mu\{(x, y) \in X^2\colon r - 2\eps < \rho(x, y) \ls r\}\right)
$$

\end{proof}

% \bibliographystyle{gost71u}
% \bibliography{improved_estimation}

\end{document}